\documentclass[runningheads]{llncs}
\usepackage[textwidth=3cm,tickmarkheight=3pt]{todonotes}
\usepackage[utf8]{inputenc}
\usepackage{amssymb, amsmath}
\usepackage{numprint}
\usepackage{algorithm}
\usepackage[noend]{algpseudocode}
\usepackage{color}
\definecolor{citeblue}{rgb}{0.1,0,.4}
\usepackage[pdftex%
,colorlinks=true%
,bookmarks=true%
,linkcolor=citeblue%
,citecolor=citeblue%
,urlcolor=blue%
,plainpages=false]{hyperref}

\usepackage[nameinlink,capitalize,noabbrev]{cleveref} %
\usepackage{commath}
\usepackage[most]{tcolorbox}
\usepackage{enumitem}
\usepackage{orcidlink}
\usepackage{graphicx}
\usepackage{caption, subcaption}

\newcommand{\qedhere}{$\blacksquare$}
\newcommand{\numvar}[1]{|#1|}
\newcommand{\domain}{\mathbb{D}}
\newcommand{\codomain}{\mathbb{D}'}
\newcommand{\cmpsign}[1]{\lesseqgtr^{#1}}
\newcommand{\cmpsignstrict}[1]{\lessgtr^{#1}}
\newcommand{\coeffsign}[1]{\sigma(#1)}
\newcommand{\ivbound}[2]{\text{bd}(#1,#2)}

\newcommand{\testlength}[1]{}
\usepackage{etoolbox}

\npthousandsep{,}

\newcommand{\smt}{\textsc{smt}}
\newcommand{\args}{\text{args}}
\usepackage{xspace}
\newcommand{\req}{\emph{ReqPivot}\xspace}
\newcommand{\subst}[3]{#1[#2\leftarrow #3]}
\newcommand{\repr}[1]{\hat{#1}}

\newcommand{\todoSN}[1]{\todo[color=orange!50]{#1 -SN}}
\title{Inductive Satisfiability Certification for Universal Quantifiers and\\Uninterpreted Function Symbols}
\titlerunning{Inductive Satisfiability Certification}

\author{Stefan Ratschan\inst{1}\orcidlink{0000-0003-1710-1513} \and
Anggha Nugraha\inst{1,2}\orcidlink{0000-0002-7139-4487} \and
Mikol\'a\v{s} Janota\inst{3}\orcidlink{0000-0003-3487-784X} \and 
Marek Dan\v{c}o\inst{3}\orcidlink{0009-0008-3031-113X}}
\authorrunning{S. Ratschan et al.}
\institute{Institute of Computer Science, The Czech Academy of Sciences,
Prague, Czech Republic \\
\email{stefan.ratschan@cs.cas.cz} \and 
Faculty of Information Technology, Czech Technical University in Prague,
Prague, Czech Republic \and 
Czech Institute of Informatics, Robotics and Cybernetics, Czech Technical University in Prague, Czech Republic}

\begin{document}
\maketitle

\begin{abstract}
  The combination of uninterpreted function symbols and universal quantification occurs in many applications of automated reasoning, for example, due to their ability to reason about arrays. Yet the satisfiability of such formulas is, in general, undecidable. In practice, SMT solvers are often successful in the unsatisfiable case, using heuristics. However, in the satisfiable case, they rely on  explicit model construction, which fails for formulas whose smallest model is not small enough. We introduce an alternative approach that certifies satisfiability using induction arguments, and apply it to the case of linear integer arithmetic. The resulting algorithm is able to prove satisfiability of formulas that are out of reach for current SMT solvers. 

\end{abstract}

\section{Introduction}

Current $\smt$ solvers have mainly been designed to prove unsatisfiability. As a consequence,  they have limited capabilities to detect satisfiability of formulas not belonging to known decidable classes. However, the ability to detect satisfiability is important for identifying counter-examples, which is indispensable in many applications~\cite{clarke25}.

In this paper, we address this problem for formulas containing universal quantifiers and uninterpreted function symbols. We mainly concentrate on the combination with linear arithmetic, which corresponds to the SMTLIB logic UFLIA, which is undecidable~\cite{Downey:72,Horbach:17}. However, such formulas occur in important applications, for example, in program verification, where uninterpreted function symbols play an important role in reasoning about arrays~\cite{bradley-manna07}. Current $\smt$ solvers mainly rely on heuristics~\cite{Reynolds:17} to prove unsatisfiability of such formulas. To prove satisfiability, they explicitly construct a model~\cite{ge-cav09,reynolds2013finite}. This fails for formulas as simple as $f(0)=0\wedge \forall x \;.\; f(x+1)=f(x)+1$, which do not have a finite model, and is costly for formulas whose smallest model is finite yet large.

In this paper, we explore an alternative approach: instead of explicit model construction, we prove satisfiability using inductive arguments. More specifically, we
\begin{itemize}
\item introduce a notion of satisfiability certificate that represents inductive arguments for satisfiability of such formulas,
\item present an algorithm that proves satisfiability by synthesizing such a satisfiability certificate for a sub-class of UFLIA formulas,
\item prove the algorithm's soundness and completeness under the assumption that the input formula satisfies a certain semantic condition enabling inductive reasoning,
\item provide an $\smt$ encoding for formulas solved by this algorithm, and 
\item demonstrate through computational experiments that our approach can prove satisfiability of formulas beyond the reach for current $\smt$ solvers.
\end{itemize}

\section{Related Work}

Our work intersects several areas of automated reasoning: model-based quantifier instantiation, decision
procedures for arrays, synthesis, and proof certificates.

\paragraph{Model-based and instantiation methods.}
Reynolds et al.~\cite{reynolds2016model,reynolds2013finite} develop finite model finding techniques for quantified formulas. Their method requires users to ensure admissibility and may introduce many case splits (``merge lemmas'') to satisfy cardinality constraints. Ge and de Moura~\cite{ge-cav09} introduce Model-Based Quantifier Instantiation (MBQI), noting that ``For certain cases containing offsets on array indices, our procedure will result in an infinite set of instantiations, while a decision procedure of LIA
will terminate. It leaves an open question for future research on how to reason about this type of formulas within an instantiation-based approach.'' Our work attempts to  address this open question.

\paragraph{Decision procedures for arrays and quantified formulas.}
Bradley et al.~\cite{bradley-vmcai06} present a decision procedure for array properties but explicitly forbid arithmetic on the universally quantified variables. Habermehl et al.~\cite{habermehl-etaps08} allow constraints of the form $a[i+c]$ with difference bounds, but remain limited to offset-bounded indices. Alberti et al.~\cite{alberti2014decision} are restricted to flat formulas with one universally quantified variable. This excludes terms such as $a[i+1]$ that are the main focus of our paper. Elad et al.~\cite{elad2024infinite} prove decidability for a first‑order fragment without arithmetic, unable to express our constraints.

\paragraph{Synthesis.}
\emph{(Program) synthesis} has been studied extensively. Techniques typically rely on
enumeration of candidates, such as counterexample-guided approaches
(CEGIS~\cite{cegis}), where some support recursion~\cite{Farzan2021,Kneuss2013}.
Syntax-Guided Synthesis (SyGuS)~\cite{sygus} provides a unifying framework:
given a context-free grammar and a logical specification, find a conforming
expression. Its performance depends crucially on the chosen grammar. SyGuS has
been extensively studied in the context of
SMT~\cite{reynolds-fmsd19,abate-jar23,ReynoldsBLT20}.
Model finding can be phrased as a synthesis problem~\cite{ParsertBJK23}.

Synthesis has also been tackled by providing \emph{templates} for the objects to
be synthesized, which has also been used in the context of model
finding~\cite{BrownJO24,elad2024infinite,BrownCJOR25}. 

Extended study has also been carried out on specifications that observe the
\emph{single-invocation property}, which means that the function to be
synthesized is always applied to the same exact list of arguments. This means
that, effectively, the function can be deskolemized and quantifier elimination
procedures can be adapted to  decide the
witness~\cite{kuncak-pldi10,kuncak-sttt13,hozzova-fmcad25}.  In this context,
Hozzová et al.~\cite{hozzova-ijcar24} synthesize recursive functions from proofs
with induction. Our approach also works outside of the single-invocation
fragment, e.g.\
$\forall x.\,f(x)+f(x)>x$ is deskolemizable but $\forall x.\,f(x)+f(x+1)>1$ is not.

\paragraph{Proof certificates and model representation.}
Representation and certification of refutation proofs and models is a
long-standing and ongoing challenge in the automated reasoning communities.
There have been recent efforts in TPTP~\cite{SteenSFM23} and SMT~\cite{Barbosa:23,BarbosaROTB19,lachnitt_et_al:LIPIcs.ITP.2025.26,leansmt}
facing both technical but also theoretical challenges~\cite{BrownJK22} (recall
that interpretations of FOL formulas are not recursively enumerable). In certain theories (e.g., real numbers with transcendental function symbols~\cite{Ratschan:23}), certifying satisfiability may also be a non-trivial theory-specific problem.

\testlength{
\paragraph{Limitations illustrated.}\todoSN{In the previous version, there is a sentence: ``Here, I am considering the works~\cite{bradley-vmcai06,ge-cav09,habermehl-etaps08}." }
To see where existing methods fall short, consider:
\begin{itemize}
    \item \textbf{Infinitely increasing bounded sequence:} $\forall x.\, f(x) < 0 \land \forall x.\, f(x) < f(x+1)$. 
    Habermehl et al.'s fragment excludes strict inequalities. It does include negation, but only if it results in the quantified variables being existential. Bradley et al. forbid $x+1$ as an argument (Theorem 4 in~\cite{bradley-vmcai06}). \textsc{Z3} cannot solve this on its own\todoSN{In the previous version, there is ``but with the addition of Lemma~\ref{eq:ind-needed-lemma} it can. Adding the lemma also helps Vampire but not \textsc{cvc5}."}. 
    \item \textbf{Two‑valued constant function:} $f(0) = 0 \land f(c) = 3 \land \forall x.\, f(x+1) = f(x)$.
 In the last meeting I hand-wavedly assumed this cannot be in the fragment of~\cite{habermehl-etaps08}. However, it is. They accept the literals $f(x) - f(x+1) \leq 0$ and $f(x+1) - f(x) \leq 0$, which in conjunction simplify to $f(x) = f(x+1)$. They can then universally quantify this to form an array property. The other equalities can be formed as value expressions $A \leq B \,\land \,B \leq A$. I can look more thoroughly at the paper to see how can they solve this. This cannot be solved by~\cite{bradley-vmcai06}, because the third conjunct contains $x+1$ as an argument of $f$ (Theorem 4 in their paper).
\textsc{Z3} cannot solve this -- Compactness section in~\cite{ge-cav09}.
\end{itemize}

These examples demonstrate the need for a method that handles arithmetic offsets while providing finite certificates for satisfiability.}

\section{Problem Statement}
\label{sec:problem-statement}

We introduce a method for certifying satisfiability of $\smt$ formulas with uninterpreted function symbols whose domain is a sort associated with a theory that has a countable canonical model. In the case where this theory is linear integer arithmetic, the resulting formulas belong to the 
SMTLIB theory UFLIA. The formulas we consider are
of the form
\begin{equation}\label{eq:main-fragment}
    F \land \forall \bar{x}.\; Q%
\end{equation}
where $F$ and $Q$ are quantifier‑free,  $\forall \bar{x}$ stands for universal quantification of arbitrarily many variables, and where  the arguments of all uninterpreted function symbols contain no uninterpreted function symbols of non-zero arity (i.e., the arguments may only contain uninterpreted constants).  

Based on the general method for satisfiability certification, we will design an algorithm for detecting satisfiability in UFLIA, under the following additional assumptions:
\begin{itemize}
    \item All uninterpreted function symbols have maximal arity $1$.  
    \item The number of quantified variables is $1$.
    \item The coefficient of the quantified variable is uniform across all argument terms of any given uninterpreted function symbol in the quantified part~$Q$.
    \end{itemize}

The algorithm is guaranteed to terminate under an additional semantic condition that enables induction (\cref{def:reqpivot-param}). We illustrate the class of formulas our algorithm can handle using two examples. 
\begin{example}
  \label{ex:formula}
  \[f(0)=0 \wedge \forall x \,.\, f(-2x+3)=g(x)+f(-2x+c+1)-c,\]
  where $c$ is an uninterpreted constant, and the coefficient of the quantified variable~$x$ is $-2$ for all arguments of $f$, and $1$ for all arguments of $g$.
\end{example}

\begin{example}[Invalid]
  \[\forall x \,.\, f(2x+3) = f(x+1)\]
  This formula is \emph{not} in our fragment because $f$ appears with arguments having different coefficients ($2$ and $1$).
\end{example}

Note that certain formulas with several quantified variables can be easily transformed to this class by a change of variables. For example, the formula
\[f(0)=0 \wedge \forall y,z \,.\, f(2y-2z+3)=g(-y+z)+f(2y-2z+c+1)-c,\]
can be reduced to Example~\ref{ex:formula} by the change of variables $x=-y+z$.

We are not aware of any result showing the decidability of this class. However, we are also not aware of any result proving it to be undecidable. Indeed, our goal is not decidability classification, but an algorithm for showing satisfiability that allows straightforward implementation in $\smt$ solvers, and that is efficient in practice.

\section{Notation and Terminology}

As usual in an $\smt$ context, we will work with a first-order language that is sorted. We also assume that variables occur only in positions where they are bound by a quantifier. We have already used the notation $\forall \bar{x}$ as an abbreviation for universal quantification over several variables. In this case, we will write $\numvar{\bar{x}}$ for the number of those variables. For a formula $F$ and uninterpreted function symbol $f$, $args(F, f)$ will denote the set of argument terms of $f$ in $F$. We will write $\subst{s}{v}{t}$ for the result of substituting the term $t$ for $v$ in the term or formula~$s$. Here, $v$ may again represent several variables, in which case we require $t$ to represent a $\numvar{v}$-vector of terms, and assume substitution to be parallel. 

We will sometimes work with the signs from the set $\{+,-\}$, where
for any two signs $s$ and $s'$, $ss'$ is $+$ iff $s=s'$, and $-$ otherwise. We will use such signs to switch between greater and less than relations in the way that for two integers~$z$ and $z'$, $z\cmpsignstrict{+} z'$ iff $z<z'$ and $z\cmpsignstrict{-} z'$ iff $z>z'$. For an interval $B=[b_{\min}, b_{\max}]$, we define $\ivbound{B}{-}= b_{\min}$ and $\ivbound{B}{+}= b_{\max}$.

We use the classical mathematical notion of a well-order. A well-order on a set $S$ is a total ordering on $S$
such that every non-empty subset of $S$ has the least element. Any well-order allows well-founded induction; to prove
a property $P$ holds for all $x$, it suffices to prove that for every $x\in S$, if for all $y\prec x$, $P(y)$ then also $P(x)$.

We use the notion of certificate in the following sense~\cite{McConnell:11}: Given a property~$P(x)$ that is typically difficult to check, a certificate for this property is an object $\Pi$ for which we have a property $P'(x,\Pi)$ and a proof that $P'(x,\Pi)$ implies $P(x)$, and for which checking $P'(x,\Pi)$ is easier than $P(x)$. Here, we will not always insist on a precise meaning of the notion of being easy to check. However, the final goal is a certificate $\Pi$, for which $P'(x,\Pi)$ can be checked more efficiently than the original property $P(x)$, using an algorithm that is in some sense simpler than an algorithm for checking $P(x)$. As a consequence, such a certificate will enable an efficient independent check of the correctness of the result of an algorithm for checking~$P(x)$.

\section{Satisfiability Certificates}
\label{sec:certificate}

In this section, we define objects that certify that a formula of the form $F \land \forall \bar{x}.\; Q$ is satisfiable. We assume that the quantified variables $\bar{x}$ have a sort $\domain$, and that all uninterpreted function symbols have domain~$\domain$ and codomain~$\codomain$. We also assume that both $\domain$ and $\codomain$ have a canonical model of the same name, with $\domain$ being countable.
Finally, we assume that for every element $z\in\domain$, there is a corresponding interpreted constant $\repr{z}$ in the language that denotes~$z$.

In the next section, we will then introduce an algorithm to compute such a certificate in the case where both $\domain$ and $\codomain$ are the integers (more specifically, formulas from UFLIA), and under the additional restrictions described in the problem statement. Also in the present section, we use the integers for illustration in all examples. We will use all semantic notions (e.g., the symbol $\models$) relative to the underlying theory, which is the combination of uninterpreted function symbols, $\domain$ and $\codomain$, in general, and UFLIA, for all examples.

Traditionally, satisfiability of a logical formula is certified by an interpretation that assigns a value to each non-logical symbol in such a way that the formula is true. However, the presence of uninterpreted function symbols with infinite domain raises the question of how to represent such an interpretation.
Our approach assigns values not to all elements of the infinite domain of uninterpreted function symbols in one step, but to each individual element of this domain separately:
\begin{definition}[Cell]\label{def:cell}
	A \emph{cell} is a term of the form $f(n_1,\dots, n_k)$ where $f$ is an uninterpreted function symbol and $n_1,\dots, n_k\in \domain$.
\end{definition}
We denote the set of cells by $C(\domain)$.
Examples of cells are $c$ and $f(7)$. The term $f(c+7)$ is not a cell.

\begin{definition}[Cell Interpretation]
	A \emph{cell interpretation} is a partial function in $C(\domain)\rightharpoonup\codomain$. A cell interpretation $I$ is \emph{compatible} with an interpretation $\mathcal{I}$ iff $I$ and $\mathcal{I}$ agree on all cells where $I$ is defined.
\end{definition}

Given a cell interpretation $I$, we denote by $\mathit{def}(I)$ the set of cells for which $I$ is defined. Given a term $t$ and a cell interpretation $I$  we define $I(t):=\mathcal{I}(t)$, if  $\mathcal{I}(t)$ is unique for every interpretation $\mathcal{I}$ compatible with $I$, and otherwise $I(t):=\bot$.
For example, for $I=\{ f(0) \mapsto 7 \}$, $I(f(0))=7$, but $I(f(1))=\bot$. Also note that $I(f(1)-f(1))=0\neq\bot$. In a similar way, for a formula $\phi$, we define that a cell interpretation~$I$ satisfies $\phi$ iff for every interpretation $\mathcal{I}$ compatible with $I$, $\mathcal{I}$ satisfies $\phi$, in which case we also write $I\models \phi$.

In order to allow a computer representation of cell interpretations, we will always work with cell interpretations that are only defined on finitely many cells.
However, we require that enough cells are defined to (i)~check satisfaction of the quantifier-free part of formulas, and (ii)~determine the value of all arguments of uninterpreted function symbols in the universally quantified part for every instantiation of the universal quantifier.

\begin{definition}[Pre-satisfiability Certificate]%
	\label{def:pre-sat-cert}
	Given a formula $\phi$ of the form $F\wedge \forall \bar{x}\;.\; Q$, we call a cell interpretation $I$ such that
	\begin{itemize}
		\item $I\models F$, and
		\item for all $z\in\domain^{\numvar{\bar{x}}}$, for every uninterpreted function symbol~$f$ and its argument terms $t\in args(Q, f)$, $I(\subst{t}{\bar{x}}{\repr{z}})\neq\bot$
	\end{itemize}
	a \emph{pre-satisfiability certificate} of $\phi$.
\end{definition}
For example, $I=\{ c\mapsto 5 \}$ is a pre-satisfiability certificate of $c\geq 5 \wedge \forall x \;.\; f(x+c+1)= f(x+c)+1$.

A pre-satisfiability  certificate interprets all arguments of uninterpreted function symbols in $F$ to a value in the domain $\domain$, and---together with a value for the quantified variables---also in $Q$. This allows us to identify, for each instantiation, the set of cells that become relevant.
\begin{definition}[Relevant Cells]
  For any pre-satisfiability certificate~$I$, and $z\in\domain^{\numvar{\bar{x}}}$, the set of \emph{relevant cells} $\Gamma_{I,z}(F\wedge \forall \bar{x}\;.\; Q):=$
\[\{ f(I(\subst{t}{\bar{x}}{\repr{z}})) \mid t\in args(Q, f), f \emph{ is an uninterpreted function symbol}\}.\]
\end{definition}
Continuing the example, since $I$ sets the value of $c$ to $5$, $\Gamma_{I,10}(\phi)=\{ f(15), f(16)\}$.

Given a pre-satisfiability certificate, we can now certify the satisfiability of formulas by propagating the values given by a pre-satisfiability certificate in such a way that for each value of the universally quantified variables, the corresponding quantified sub-formula holds.
\begin{definition}[Satisfiability Certificate]%
	\label{def:certificate}
	Given a formula $\phi$  of the form $F\wedge \forall \bar{x}\;.\; Q$, a \emph{satisfiability certificate for $\phi$} is
	\begin{itemize}
        \item a pre-satisfiability certificate $I$ of $\phi$,
		\item a well-order $\preceq$ on $\domain^{\numvar{\bar{x}}}$,%
		\item for every $z\in\domain^{\numvar{\bar{x}}}$
                  \begin{itemize}
                  \item a set $X_z\subseteq \Gamma_{I,z}(\phi)$  (the \emph{propagated cells}) such that \[X_z\cap \mathit{def}(I)=\emptyset \text{ and } X_z\cap\bigcup_{z'\prec z} \Gamma_{I,z'}(\phi)=\emptyset,\] and
                  \item a certificate (the \emph{satisfiability propagator}) showing that for all values of cells in $\Gamma_{I,z}(\phi)\setminus \mathit{def}(I)\setminus X_z$ there exist values for the cells in $X_z$ such that for the cell interpretation $I_{X_z}$ assigning these values to $X_z$, $I\cup I_{X_z}\models \subst{Q}{\bar{x}}{\repr{z}}$.
                  \end{itemize}
       \end{itemize}                
\end{definition}

Intuitively, a satisfiability certificate encodes an induction proof, where the well-order~$\preceq$ sets the direction of induction and each step of the induction shows satisfiability of the universally quantified part~$Q$ when enforcing another choice~$z$ for $\bar{x}$. Such a step is allowed to choose values for the propagated cells~$X_z$. These cells must be chosen in such a way that the propagated cells neither have already been given a value by the pre-satisfiability certificate~$I$ nor by an earlier induction step. The satisfiability propagator is only required to show the existence of appropriate values for the propagated cells. In practice, it will usually assign a value to these cells.

\begin{example}\label{ex:certificate-example}
A satisfiability certificate for the formula \[f(3, a) \geq 4 + g(b) \wedge \forall x, y \:.\: f(x, y) < x + g(y) \wedge g(y) = g(y+b) + 2b\] is 
given by the pre-satisfiability certificate $I = \{ a \mapsto 0, b \mapsto 1, f(3,0) \mapsto 4, g(1) \mapsto 0 \}$ and the following table, where the first column indicates the order~$\preceq$:
    \[\small
    \begin{array}{|l|l|l|l|}\hline
      & X_{(x,y)} & propagators\\ \hline
        x=3, y=0  & \{g(0)\} & g(0)\mapsto 2\\
        x=3,y=1,2,\dots & \{f(3,y), g(y+1)\} & g(y+1)\mapsto g(y)-2, f(x,y)\mapsto y+g(y)-1\\
        x=3, y=-1,-2,\dots &  \{f(3,y), g(y)\} & g(y)\mapsto g(y+1)+2, f(x,y)\mapsto y+g(y+1)+1 \\
       x\neq 3, y \text{ arbitrary}  &   \{f(x,y) \} & f(x,y)\mapsto y+g(y)-1\\ \hline
    \end{array}
  \]
\end{example}

The notion of certificate defined here is abstract in the sense that it does not fix a concrete finite representation that allows for an independent check of whether a given object is indeed a satisfiability certificate. For arriving at concrete independently checkable certificates, it suffices to fix a well-order and decide on a representation of propagated cells and satisfiability propagator that enables such an independent check. For example, all these definitions could be written in the language of linear integer arithmetic, which allows the independent check to be done using a corresponding decision procedure. We will illustrate this in the next section for the special class of formulas described in \cref{sec:problem-statement}.

Also note that for $z\in\domain^{\numvar{\bar{x}}}$ such that $\Gamma_{I,z}(\phi)\subseteq \mathit{def}(I)$, the definition requires the set of propagated cells $X_z$ to be empty. As a consequence, for such $z$, the satisfiability propagator only needs to certify that $I\models \subst{Q}{\bar{x}}{\repr{z}}$. In other words, the cell interpretation~$I$ serving as a pre-satisfiability certificate already plays the role of the satisfiability propagator. 

\begin{theorem}\label{thm:certThenSat}
	Every formula $\phi$  of the form $F\wedge \forall \bar{x}\;.\; Q$ that has a satisfiability certificate is satisfiable.
\end{theorem}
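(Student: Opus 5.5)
We build a model directly from the certificate by transfinite (well-founded) recursion along $\preceq$ on $\domain^{\numvar{\bar{x}}}$. Since $\domain$ is countable, $\domain^{\numvar{\bar{x}}}$ is countable and $\preceq$ is a well-order on it. We define an increasing chain of cell interpretations $J_z$, one for each $z$. We start from $I$ and put
\[J_z := I \cup \bigcup_{z'\prec z} I_{X_{z'}} \cup I_{X_z},\]
where $I_{X_z}$ is chosen as follows. Let $J_{\prec z} := I\cup\bigcup_{z'\prec z} I_{X_{z'}}$. Fix arbitrary default values for those cells of $\Gamma_{I,z}(\phi)\setminus\mathit{def}(I)\setminus X_z$ that are not yet defined in $J_{\prec z}$, and record these defaults as part of the construction. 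The satisfiability propagator then guarantees values for $X_z$ such that $\subst{Q}{\bar{x}}{\repr{z}}$ holds under $I\cup I_{X_z}$ together with the current values of the other relevant cells. We take $I_{X_z}$ to be such an assignment. If several exist, the axiom of choice, or simply the well-order combined with a choice function, selects one.

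The key bookkeeping step is to show that the union $J:=\bigcup_z J_z$, together with all default values introduced, is a well-defined (functional) cell interpretation. So we must check that no cell ever receives two different values. Cells in $X_z$ are disjoint from $\mathit{def}(I)$ by definition. They are also disjoint from every $\Gamma_{I,z'}(\phi)$ with $z'\prec z$, and so in particular from every $X_{z'}$ and from every default assigned at stage $z'$, since those defaults lie in $\Gamma_{I,z'}(\phi)$. The remaining case is that a cell $f(n)\in X_z$ might have received a default at a later stage $z''\succ z$. That cannot happen, because defaults at stage $z''$ are only given to cells not yet defined, and $f(n)$ is already defined by $J_z$. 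Hence every cell is assigned at most once.

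Next we extend $J$ to a full interpretation $\mathcal{I}$. For cells not defined by $J$, $\mathcal{I}$ takes arbitrary values in $\codomain$. Uninterpreted constants and the interpreted symbols follow $I$ and the canonical models. Since $J\supseteq I$ and $I\models F$, we get $\mathcal{I}\models F$. For the quantified part, fix $z$. Because $I$ is a pre-satisfiability certificate, every argument term of $\subst{Q}{\bar{x}}{\repr{z}}$ evaluates under $\mathcal{I}$ to a value determined by $I$. The function applications in $\subst{Q}{\bar{x}}{\repr{z}}$ therefore refer exactly to the cells in $\Gamma_{I,z}(\phi)$. The values $\mathcal{I}$ assigns to those cells outside $\mathit{def}(I)\cup X_z$ are precisely the values fixed before $I_{X_z}$ was chosen. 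Some of them come from earlier stages and some are defaults at stage $z$, and none change afterwards, by the bookkeeping above. By the choice of $I_{X_z}$, $\subst{Q}{\bar{x}}{\repr{z}}$ holds under $\mathcal{I}$. Since every element of $\domain$ is denoted by some $\repr{z}$, this gives $\mathcal{I}\models\forall\bar{x}.\,Q$.

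The main obstacle is making the "values fixed before stage $z$ are never changed later" argument precise. This is exactly where the two disjointness conditions in \cref{def:certificate} are used, and where the recursion along $\preceq$ needs to be justified. I would handle it by defining the stage-$z$ data by well-founded recursion, using the fact that every non-empty subset has a least element. I would then prove the functionality claim by well-founded induction on $z$: at each stage the partial interpretation built so far is a function, and it extends all earlier stages.
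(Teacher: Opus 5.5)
Your proposal is correct and follows essentially the same route as the paper: a well-founded induction along $\preceq$ that extends $I$ stage by stage with the values supplied by the satisfiability propagators, using the two disjointness conditions to ensure no value is ever revised, and then taking the union and filling in the remaining cells arbitrarily. Your version is somewhat more careful than the paper's, because you build one coherent chain by recursion and explicitly assign values to the not-yet-defined non-propagated cells of $\Gamma_{I,z}(\phi)$; just make $J_{\prec z}$ include the defaults from earlier stages (or use a single fixed default value), since as displayed it omits them and a cell could then receive two different defaults.
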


\begin{proof}
	Assume a formula of the given form and assume that it  has a satisfiability certificate.
	The corresponding pre-satisfiability certificate $I$ satisfies the quantifier free part~$F$ of $\phi$.
    
    We first show that for every $z\in\domain^{\numvar{\bar{x}}}$, there exists an extension $I'$ of $I$ such that $I'$ satisfies all quantifier instantiations occurring before $z$ wrt.\ the order $\preceq$, that is,  $I'\models\{ \subst{Q}{\bar{x}}{\repr{z}'} \mid z'\preceq z\}$. For this we use induction on well-orders: Let $z\in\domain^{\numvar{\bar{x}}}$, and let $\Phi$ be $\{ \subst{Q}{\bar{x}}{\repr{z}'} \mid z'\prec z\}$. Assume there is an extension $I_{\Phi}$ of $I$ with $I_{\Phi}\models\Phi$. We prove there is an extension $I'$ of $I$ with $I'\models\Phi\cup \{ \subst{Q}{\bar{x}}{\repr{z}}\}$. %
    By \cref{def:certificate}, the propagated cells $X_z$ are disjoint from all cells appearing in instantiations $z' \prec z$, and hence do not occur in any element of $\Phi$. Let $I'$ be the extension of $I_{\Phi}$ that assigns to the propagated cells $X_z$ the values whose existence is guaranteed by the satisfiability propagator. Since these cells do not occur in $\Phi$, $I'$ and $I_{\Phi}$ coincide on all cells occurring in $\Phi$. Hence $I'$ satisfies both $\Phi$ and $\subst{Q}{\bar{x}}{\repr{z}}$, and so it satisfies $\Phi\cup \{ \subst{Q}{\bar{x}}{\repr{z}}\}$.

	Now for every $z\in\domain^{\numvar{\bar{x}}}$, denote by $I_z$ the cell interpretation satisfying $\subst{Q}{\bar{x}}{\repr{z}}$ constructed as above.
    Observe that for every cell~$u$, and for all~$z, z'\in \domain^{\numvar{\bar{x}}}$ with $I_z(u)\neq \bot$ and $I_{z'}(u)\neq \bot$, $I_z(u)=I_{z'}(u)$. Let $I_\phi$ be the (classical) interpretation defined as follows: for any cell $u$, if there exists a $z$ such that $I_z(u)\neq \bot$, then $I_\phi(u)=I_z(u)$; otherwise $I_\phi(u)=0$. Clearly $I_\phi\models F$. Moreover, for every $z\in\domain^{\numvar{\bar{x}}}$,  $I_\phi\models  \subst{Q}{\bar{x}}{\repr{z}}$, since $I_\phi$ coincides with $I_z$ on all cells where $I_z$ is defined. Hence  $I_\phi\models \phi$.
\qedhere\end{proof}

A satisfiability certificate only shows that a formula is satisfiable. It does not provide concrete values of the corresponding model. We now show how to compute such values from a given satisfiability certificate containing a pre-satisfiability certificate $I$. For this, we require the satisfiability propagators to be constructive. So, for $z\in\domain^{\numvar{\bar{x}}}$ and for each $u\in X_z$ we assume a function $prop_{u,z}$ such that for
a cell interpretation $I'$ with $\mathit{def}(I')=\Gamma_{I,z}(\phi)\setminus \mathit{def}(I)\setminus X_z$, $I\cup I'\cup \{ u \mapsto prop_{u,z}(I\cup I') \mid u\in{X_z}\}\models\subst{Q}{\bar{x}}{\repr{z}}$.

Observe that for any cell $u$ there is at most one $z\in\domain^{\numvar{\bar{x}}}$ with $u\in X_z$ (if there were two distinct instantiations, then since $\preceq$ is a well-order, there is a minimal one, call it $z_{\min}$, and then, for any $z'\succ z$, the requirements on $X_{z'}$ exclude the possibility of $X_{z'}$ containing $u$, as well). So for any cell $u$, let $\mathit{inst}(u)$ either be the $z\in\domain^{\numvar{\bar{x}}}$ with $u\in X_z$ or $\bot$ if there is no such instantiation.

Then, for the assignment $I_\phi$ from the proof of Theorem~\ref{thm:certThenSat}, which satisfies a given formula~$\phi$, and a given cell~$u$, the value of $I_\phi(u)$ can be computed by the following recursive function $val(u): C(\domain)\rightarrow \codomain$:
\begin{tabbing}\hspace*{0.4cm}\=\hspace*{0.4cm}\=\kill
	\textbf{if} $I$ assigns a value $x$ to $u$ \textbf{then} \textbf{return} $x$\\
	\textbf{if} $\mathit{inst}(u)=\bot$ \textbf{then} \textbf{return} any element of $\domain$\\
	\textbf{return} $prop_{u,\mathit{inst}(u)}(I\cup \{ v\mapsto val(v) \mid v\in \Gamma_{I,\mathit{inst}(u)}(\phi)\setminus \mathit{def}(I)\setminus X_{\mathit{inst}(u)}\})$
\end{tabbing}

Here, the recursive calls go down the order $\prec$ in the following sense:
\begin{property}
  \label{prop:recmonot}
	Let $u$ be a cell such that $\mathit{inst}(u)\neq\bot$. Then for any other cell~$v\in\Gamma_{I,\mathit{inst}(u)}(\phi)\setminus \mathit{def}(I)\setminus X_{\mathit{inst}(u)}$ with $\mathit{inst}(v)\neq\bot$, we have $\mathit{inst}(v)\prec \mathit{inst}(u)$.
\end{property}

So each recursive call of the program $val$ either results in $\mathit{inst}(u)$ decreasing or being $\bot$. In the latter case, the program terminates immediately, and the former case cannot happen infinitely often, since the used order is well-founded. So we have:
\begin{property}
The program $val$ terminates.
\end{property}

Note that, in general, it is not possible to compute $val(u)$ by computing the values of $X_z, z \in \domain^{\numvar{\bar{x}}}$, with $z$ being initialized with the minimal $z$ wrt. $\prec$, and then using a loop that increases $z$ from one successor wrt. $\prec$ to the next. For example, if  $\prec$  orders all even integers before all odd ones, then this loop would have to iterate over infinitely many even numbers before computing the value of an odd one.

\section{Algorithm}\label{sec:alg}

We now proceed from the  general certification framework to concrete computation. Throughout this section we assume a formula $\phi$ of the form~$F \land \forall \bar{x}.\; Q$, but with the additional restrictions described in the problem statement (\cref{sec:problem-statement}). In particular, we will allow only one quantified variable~$x$. The restrictions also require all arguments of any uninterpreted function symbol $f$ in $Q$ to have the same coefficient, and we will denote the sign of this coefficient by $\coeffsign{f}$.

We introduce an algorithm that checks satisfiability of $\phi$  by computing a satisfiability certificate of a certain form. 
The computed certificate explicitly certifies satisfiability of the quantifier-free part $F$ and satisfiability over a finite interval $B = [b_{\min}, b_{\max}]$ of instantiations of the quantified part $Q$. For values outside $B$, it uses satisfiability propagators with an order that propagates upward from $b_{\max}+1$ and downward from $b_{\min}-1$. We show soundness of the algorithm and completeness relative to conditions that ensure such propagation is possible.

We now formalize the condition ensuring that propagation outward from a base interval $B$ is possible. Let $\mathcal{F}$ be the set of uninterpreted function symbols in the quantified part $Q$ of $\phi$, and let $\mathcal{T}$ be the set of terms of the form $f(t)$ in $Q$. For a propagation direction $s \in \{+, -\}$ (with $+$ denoting increasing $x$ and $-$ decreasing $x$), we select a subset $\mathcal{S} \subseteq \mathcal{T}$ of terms that are \emph{extremal} in the sense that they are all equal and, depending on the sign of the coefficient and the propagation direction, stand in the appropriate strict order to other argument terms of the same function symbol. The \emph{propagability condition} $\Phi_{\text{prop}}(\mathcal{S})$ guarantees that, regardless of what values are assigned to the non‑extremal occurrences, we can find values $v_1,\dots,v_r$ such that assigning $v_i$ to all occurrences in $\mathcal{S}$ makes $Q$ true. Different function symbols may receive different values. The condition for propagation in the opposite direction is symmetric, using a different subset $\mathcal{S}' \subseteq \mathcal{T}$.

\begin{definition}[\req Condition]
\label{def:reqpivot-param}
For any choice of subsets $\mathcal{S}\subseteq\mathcal{T}$ and signs $s\in\{ -, +\}$, define:
\begin{itemize}
\item \textbf{Extremal condition $\Phi^s_{\text{ext}}(\mathcal{S})$:}
  \vspace*{-0.2cm}
  \[
        \subst{ \bigwedge_{f\in\mathcal{F}}\left(\bigwedge_{f(t),f(t') \in \mathcal{S}} t = t' \;\land\;
        \bigwedge_{f(t) \in \mathcal{S}, f(t') \in \mathcal{T}\setminus\mathcal{S}} t' \cmpsignstrict{\coeffsign{f}s} t
        \right)}{x}{\repr{0}}.
    \]
    
    \item \textbf{Propagability condition $\Phi_{\text{prop}}(\mathcal{S})$:}
        \[\forall x \forall \bar{u} \exists \bar{v}.\ Q'(\mathcal{S})\]
        where $\bar{u}$ corresponds to the variables $u_{f(t)}$ with $f(t)\in\mathcal{T}\setminus\mathcal{S}$ and $\bar{v}$ to the variables $v_{f(t)}$ with $f(t)\in \mathcal{S}$, and $Q'$ is obtained from $Q$ by simultaneously replacing
        \begin{itemize}
            \item every $f(t)\in\mathcal{T}\setminus\mathcal{S}$ by $u_{f(t)}$, and
            \item every $f(t)\in\mathcal{S}$ by $v_{f(t)}$.
        \end{itemize}
\end{itemize}
We say that $Q$ satisfies the \emph{\req condition} for subsets $\mathcal{S}\subseteq\mathcal{T}$ and $\mathcal{S}'\subseteq\mathcal{T}$ iff $\Phi^+_{\text{ext}}(\mathcal{S})$, $\Phi_{\text{prop}}(\mathcal{S})$,
$\Phi^{-}_{\text{ext}}(\mathcal{S}')$, and $\Phi_{\text{prop}}(\mathcal{S}')$ are satisfiable. Moreover, $Q$ satisfies the \emph{\req condition} iff such a pair~$(\mathcal{S}, \mathcal{S}')$ exists.
\end{definition}

\begin{example}[\req\ Illustration]
\label{ex:req-illustration}
Consider
\[
\forall x.\;f(x+c+3)+f(x+4)+f(x+1)=g(x+2)+h(x),
\]
with ground constant $c$. All coefficients are $+1$, hence
$\coeffsign{f}=\coeffsign{g}=\coeffsign{h}=+$.
The set of terms is
\[
\mathcal{T} =\{f(x+c+3),\,f(x+4),\,f(x+1),\,g(x+2),\,h(x)\}.
\]
\smallskip
\noindent\textbf{Upward propagation ($s=+$).}
Let $\mathcal{S}=\{f(x+4),\,g(x+2)\}$.
For $f$, since $\coeffsign{f}s=+$ and $\cmpsignstrict{+}$ is $<$, the part of the extremal conditions under the parentheses reads
\[
x+c+3<x+4\;\land\;x+1<x+4,
\]
which, after substituting $0$ for $x$, yields $c<1$. For $g$ and $h$, $\mathcal{T}\setminus\mathcal{S}$ does not contain corresponding terms, so the condition is trivial.

Propagability becomes
\[
\begin{aligned}
    \forall x\;\forall u_{f(x+c+3)},\,u_{f(x+1)},\,u_{h(x)}\;\exists v_{f(x+4)},\,v_{g(x+2)}\;\text{such that} \\
    \qquad
    u_{f(x+c+3)}+v_{f(x+4)}+u_{f(x+1)} = v_{g(x+2)}+u_{h(x)}.
\end{aligned}
\]
This is satisfiable, e.g.\ by choosing $v_{g(x+2)}=0$ and $v_{f(x+4)}=u_{h(x)}-u_{f(x+c+3)}-u_{f(x+1)}$.

\smallskip
\noindent\textbf{Downward propagation ($s=-$).}
Let $\mathcal{S}'=\{f(x+1),\,h(x)\}$. Since $\coeffsign{f}s=-$ and $\cmpsignstrict{-}$ is $>$,  the part of the extremal conditions under the parentheses reads
\[
x+c+3>x+1\;\land\;x+4>x+1,
\]
which, after substituting $0$ for $x$, yields $c>-2$. The condition for $h$ is trivial.

Propagability becomes
\[
\begin{aligned}
    \forall x\;\forall u_{f(x+c+3)},\,u_{f(x+4)},\,u_{g(x+2)}\;\exists v_{f(x+1)},\,v_{h(x)}\;\text{such that} \\
    \qquad
    u_{f(x+c+3)}+u_{f(x+4)}+v_{f(x+1)}=u_{g(x+2)}+v_{h(x)},
\end{aligned}
\]
which is satisfiable, e.g. by choosing $v_{h(x)}=0$ and $v_{f(x+1)}=u_{g(x+2)}-u_{f(x+c+3)}-u_{f(x+4)}$.

\smallskip
\noindent\textbf{Conclusion.}
$Q$ satisfies the \req\ condition for $(\mathcal{S},\mathcal{S}')$
iff $c\in(-2,1)$.
\end{example}

Now, given a concrete interval $B = [b_{\min}, b_{\max}]$, we need additional ``clash conditions'' that prevent the cells generated during propagation from clashing with the ground part $F$.

\begin{definition}[Interval Extension Formulas]
\label{def:interval-extension-formulas-param}
For an interval $B\subseteq\mathbb{Z}$, a sign~$s\in\{-,+\}$, and a subset $\mathcal{S} \subseteq \mathcal{T}$, define
 $\Psi^{s}(B,\mathcal{S})$:=
    \[
    \Phi^{s}_{\text{ext}}(\mathcal{S})\;\land\; 
    \Phi_{\text{prop}}(\mathcal{S}) \;\land
    \bigwedge_{f(t) \in \mathcal{S}} \bigwedge_{\substack{a \in \args(F,f)}} a\cmpsignstrict{\coeffsign{f}s}  \subst{t}{x}{\ivbound{B}{s}} 
    \]
    where $\Phi^{s}_{\text{ext}}(\mathcal{S})$ and $\Phi_{\text{prop}}(\mathcal{S})$ are as defined in \cref{def:reqpivot-param}.
\end{definition}

Thus, for any sign $s\in\{-,+\}$ and subset $\mathcal{S}\subseteq\mathcal{T}$, $\Psi^s(B,\mathcal{S})$ asserts that $\mathcal{S}$ satisfies the \req\ conditions for direction $s$ and \emph{additionally} satisfies the clash condition for interval $B$. This clash condition ensures that for every ground argument $a$ of $f$ appearing in $F$, the propagated cell $f(\subst{t}{x}{\ivbound{B}{s}})$ (with $f(t)\in\mathcal{S}$) is different from the ground cell $f(a)$.

\begin{theorem}[Interval Satisfiability]
\label{thm:interval-certificate-param}
Let $B$ be a non‑empty integer interval, and assume that
\[
  \Biggl(F \land \bigwedge_{x \in B} \subst{Q}{x}{\repr{x}}\Biggr) \;\land\;
  \Biggl(\bigvee_{\mathcal{S}\subseteq \mathcal{T}}\Psi^+(B,\mathcal{S})\Biggr)\;\land\;
  \Biggl(\bigvee_{\mathcal{S}'\subseteq \mathcal{T}}\Psi^-(B,\mathcal{S}')\Biggr)
\]
is satisfiable. Then  $F \land \bigwedge_{x \in B} \subst{Q}{x}{\repr{x}}\wedge \forall x\;.\; x\not\in B \Rightarrow Q$ has a satisfiability certificate, and hence $\phi$ is satisfiable.
\end{theorem}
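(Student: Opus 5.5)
Fix a model $\mathcal{M}$ of the displayed conjunction, together with disjuncts $\mathcal{S}$ and $\mathcal{S}'$ for which $\Psi^+(B,\mathcal{S})$ and $\Psi^-(B,\mathcal{S}')$ hold in $\mathcal{M}$. From $\mathcal{M}$ we build a satisfiability certificate in the sense of \cref{def:certificate}, and then invoke \cref{thm:certThenSat}.

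\textbf{Pre-satisfiability certificate.} Let $I$ assign the values given by $\mathcal{M}$ to three kinds of cells: all uninterpreted constants, all cells $f(\mathcal{M}(a))$ with $a\in\args(F,f)$, and all cells $f(\mathcal{M}(t[x\leftarrow \repr{b}]))$ with $f(t)\in\mathcal{T}$ and $b\in B$. This set is finite. Since only constants occur inside arguments, $I\models F$. Also, every instantiated argument $t[x\leftarrow\repr{z}]$ evaluates uniquely, so $I$ is a pre-satisfiability certificate. The same reasoning gives $I\models Q[x\leftarrow\repr{b}]$ for each $b\in B$.

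\textbf{Order and propagated cells.} Take the well-order that lists $B$ first, then $b_{\max}+1, b_{\min}-1, b_{\max}+2, b_{\min}-2,\dots$. Equivalently, it is any order in which $B$ comes first and the two rays outside $B$ are each traversed monotonically away from $B$. For $z\in B$ set $X_z=\emptyset$; the propagator is just $I$. For $z>b_{\max}$ let $X_z=\{f(\mathcal{M}(t[x\leftarrow\repr z])) : f(t)\in\mathcal{S}\}$, and for $z<b_{\min}$ use $\mathcal{S}'$ symmetrically.

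\textbf{Verifying the certificate conditions.} We need three facts.

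First, within each $f$ all terms of $\mathcal{S}$ denote the same cell. By $\Phi^+_{\text{ext}}(\mathcal{S})$ the argument terms coincide at $x=0$. Their difference is independent of $x$, because the coefficient of $x$ is uniform, so they coincide at every $z$. The same uniformity shows that the strict comparisons in $\Phi^+_{\text{ext}}$ hold at every $z$.

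Second, the extremal cell lies strictly beyond every other relevant cell. Let $d_f$ denote the coefficient of $x$ in the arguments of $f$, with $\coeffsign{f}$ its sign. For $z'\prec z$ on the upper ray or in $B$, every $f$-argument $t'(z')$ satisfies $t'(z')\cmpsignstrict{\coeffsign{f}} t(z)$. The reason is that it equals $t'(z)-d_f(z-z')$ and $z-z'>0$, while $t'(z)$ is already $\cmpsignstrict{\coeffsign{f}}$-bounded by $t(z)$ (or equal to it if $f(t')\in\mathcal S$). Hence $X_z$ is disjoint from $\Gamma_{I,z'}$ for such $z'$. For $z'$ on the lower ray, $z'<b_{\min}\le z$, so the same monotonicity argument applies. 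Here we may need that the lower-ray cells also lie below $B$'s cells; if interleaving causes trouble, I would simplify by ordering the upper ray after the whole lower ray and check the bound directly.

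Third, the clash conjunct of $\Psi^+$ gives $a\cmpsignstrict{\coeffsign f} t(b_{\max})$ for every $a\in\args(F,f)$. Since $t(z)$ moves monotonically in direction $\coeffsign{f}$ as $z$ increases, $X_z\cap\mathit{def}(I)=\emptyset$ on the $F$-part. The $B$-part of $\mathit{def}(I)$ is handled by the second fact.

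\textbf{Propagator.} Given arbitrary values for the remaining relevant cells, $\Phi_{\text{prop}}(\mathcal{S})$ holds in $\mathcal{M}$. Instantiating $x:=z$ and $\bar u$ with those values yields $\bar v$ making $Q'$ true. Assigning $v$ to the shared cell is consistent because of the first fact. Different function symbols are different cells, so there is no conflict between them.

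The main obstacle is the second fact, i.e.\ disjointness of $X_z$ from all cells of earlier instantiations, including those of the opposite ray. This step requires careful use of the uniform-coefficient restriction together with the sign bookkeeping $\coeffsign{f}s$. A related subtlety is that $\Phi_{\text{prop}}$ quantifies over all $\bar u$ independently, even when two $\bar u$-terms denote the same cell or coincide with an $\mathcal{S}$-cell. The latter is ruled out by strictness in $\Phi_{\text{ext}}$, and the former only restricts the $\forall$, which is harmless.
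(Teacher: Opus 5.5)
Apart from one step, your argument is the paper's. It uses the same finite pre-certificate read off a model of the premise, puts $B$ first in the well-order, builds $X_z$ from $\mathcal{S}$ above $b_{\max}$ and from $\mathcal{S}'$ below $b_{\min}$, and gets disjointness from the uniform coefficient together with $\Phi^{\pm}_{\text{ext}}$ and the clash conjuncts. The paper lists the whole upper ray before the lower one. Your interleaving works equally well, and your hedge about it is unnecessary: for $z>b_{\max}$ every $z'\prec z$ satisfies $z'<z$, which is all your monotonicity chain uses, and symmetrically below $b_{\min}$.

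The step that fails is the propagator. ``Assigning $v$ to the shared cell is consistent because of the first fact'' does not follow; your first fact is what creates the difficulty. $\Phi_{\text{prop}}(\mathcal{S})$ has a separate existential variable $v_{f(t)}$ for each term $f(t)\in\mathcal{S}$. Meanwhile $\Phi^{+}_{\text{ext}}(\mathcal{S})$ makes all $\mathcal{S}$-terms of one symbol $f$ denote a single cell at every $z$. Nothing forces the witnesses $v_{f(t_1)},v_{f(t_2)}$ for two syntactically distinct such terms to agree, so there may be no value for that one cell that makes $\subst{Q}{x}{\repr{z}}$ true. Your remark on $\bar u$ is right because identifying universally quantified variables only weakens the requirement. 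Identifying existentially quantified ones strengthens it, and that is not implied.

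This step actually fails as the definitions are written. Take $F:=(c=1)$, $Q:=(x\le 0\vee f(x+c)\neq f(x+1))$, $B=[0,0]$ and $\mathcal{S}=\mathcal{S}'=\mathcal{T}$. Both extremal conditions reduce to $c=1$ and the clash conjuncts are vacuous. $\Phi_{\text{prop}}(\mathcal{T})$ is the true sentence $\forall x\,\exists v_1 v_2.\,(x\le 0\vee v_1\neq v_2)$, and $F\wedge\subst{Q}{x}{\repr{0}}$ holds, so the premise is satisfiable. Yet $\phi$ is unsatisfiable, since the instance $x=1$ reads $f(2)\neq f(2)$; \cref{alg:interval-check-param-withBigvee} would even answer \textsc{Sat} in its first iteration.

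The paper's proof makes the same move (``values $v_{f(t)}$ such that assigning these values to each corresponding cells in $X_z$ satisfies\dots''), so this is a hole you share with it rather than a deviation. The repair is to let $\Phi_{\text{prop}}(\mathcal{S})$ use one existential variable $v_f$ per function symbol, replacing every $f(t)\in\mathcal{S}$ by it. This is what the paper's informal description (``different function symbols may receive different values'') suggests. Equivalently, add the equations $v_{f(t)}=v_{f(t')}$ for $f(t),f(t')\in\mathcal{S}$ under the existential quantifier. With that definition, your first fact shows that $X_z$ contains exactly one cell per symbol, $v_f$ is assigned to it, and the rest of your proof goes through.
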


The formula in the premise of \cref{thm:interval-certificate-param} belongs to the language of Presburger arithmetic (i.e., SMTLIB LIA), and hence its satisfiability can be algorithmically checked. Due to this, an interval $[b_{\min},b_{\max}]$ such that this formula is satisfiable is an algorithmically checkable certificate for the satisfiability of $\phi$. In order to make the algorithmic check more efficient, one can also include the cell interpretation necessary for showing $F \land \bigwedge_{x \in B} \subst{Q}{x}{\repr{x}}$ to be satisfiable, and the sets $\mathcal{S}$ and $\mathcal{S}'$.

Algorithmic implementations of the satisfiability propagators whose existence is ensured by the proof of \cref{thm:interval-certificate-param} can often be obtained from the witness functions that are 
a byproduct of quantifier elimination on the
propagability conditions~$\Phi_{\text{prop}}(\mathcal{S})$ and~$\Phi_{\text{prop}}(\mathcal{S}')$. These propagators  can then be used in the recursive evaluation function $val$ defined in \cref{sec:certificate} to compute values of the function that a model assigns to an uninterpreted function symbol.

The theorem immediately yields an algorithm that checks satisfiability by searching for the interval~$[b_{\min},b_{\max}]$. This can be seen in \cref{alg:interval-check-param-withBigvee}. An extension of the algorithm that in the case of the return value \textsc{Sat} also returns a corresponding certificate is straightforward.

\begin{algorithm}
\caption{Interval‑Based Satisfiability Check}
\label{alg:interval-check-param-withBigvee}
\begin{algorithmic}[1]
\Require $\phi = F \land \forall x.\, Q$ in the fragment
\Ensure \textsc{Sat} or \textsc{Unsat}
\State $b_{\min} \gets 0$, $b_{\max} \gets 0$ 
\While{$F \land \bigwedge_{x=b_{\min}}^{b_{\max}} Q[x]$ is satisfiable}
    \State $\text{up}\gets\bigvee_{\mathcal{S}\subseteq \mathcal{T}}\Psi^+([b_{\min},b_{\max}],\mathcal{S})$
    \State $\text{down}\gets\bigvee_{{\mathcal{S}'\subseteq \mathcal{T}}}\Psi^-([b_{\min},b_{\max}],\mathcal{S}')$ 
    \If{$\left(F \land \bigwedge_{x=b_{\min}}^{b_{\max}} Q[x]\right) \land \text{up} \land \text{down}$ is satisfiable}
        \State \Return \textsc{Sat} 
    \EndIf
    \State $b_{\min} \gets b_{\min} - 1$, $b_{\max} \gets b_{\max} + 1$ \Comment{Expand interval symmetrically}
\EndWhile
\State \Return \textsc{Unsat}
\end{algorithmic}
\end{algorithm}

The correctness of the result \textsc{Sat} of the algorithm is an immediate consequence of \cref{alg:interval-check-param-withBigvee}. The correctness of the result \textsc{Unsat} follows from the fact that the termination condition of the while loop is a logical consequence of the input formula. However, it might also happen that the algorithm runs forever. We can exclude this under certain conditions:

\begin{theorem}[Relative Completeness]
\label{thm:completeness-bigvee}
If the input formula~$\phi$ is satisfiable and it satisfies the \req condition, then \cref{alg:interval-check-param-withBigvee} terminates and returns \textsc{Sat}.
\end{theorem}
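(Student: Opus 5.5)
The plan is to show that the guard of the \textbf{while} loop never fails and that, once the interval $[b_{\min},b_{\max}]$ is large enough, the test in the \textbf{if} statement succeeds. Fix a model $\mathcal{M}$ of $\phi$. For the loop guard nothing needs to be constructed: at every iteration $F \land \bigwedge_{x=b_{\min}}^{b_{\max}} \subst{Q}{x}{\repr{x}}$ is a logical consequence of $\phi$, so it is satisfied by $\mathcal{M}$. Hence \cref{alg:interval-check-param-withBigvee} never returns \textsc{Unsat}, and it suffices to exhibit an iteration at which the \textbf{if}-test holds. At the $n$-th iteration the interval is $B_n=[-n,n]$.

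\textbf{Step 1 (extremal and propagability parts).} Take the pair $(\mathcal{S},\mathcal{S}')$ given by the \req condition. I would read \cref{def:reqpivot-param} relative to the uninterpreted constants occurring in $Q$ (such as $c$ in \cref{ex:req-illustration}). The conditions $\Phi^{+}_{\text{ext}}(\mathcal{S})$, $\Phi_{\text{prop}}(\mathcal{S})$, $\Phi^{-}_{\text{ext}}(\mathcal{S}')$ and $\Phi_{\text{prop}}(\mathcal{S}')$ are formulas whose only free symbols are these constants, since $x$ is instantiated by $\repr{0}$ or bound. So I need a single model $\mathcal{M}$ of $\phi$ whose constant values satisfy all four. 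None of the four depends on $B$.

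This is the step I expect to be the main obstacle. The statement only says each condition is satisfiable and $\phi$ is satisfiable. These must be combined so that one assignment to the constants works both for $\phi$ and for the \req conditions. My first attempt is to interpret ``$\phi$ satisfies the \req condition'' as being witnessed together with a model of $\phi$, that is, under the constants of some model. This is the natural reading, since the algorithm checks everything in one conjunction. If a weaker reading is intended, I would try to argue that the \req conditions only restrict constants that the model can freely re-choose. I would first attempt the former and state it explicitly as the reading used.

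\textbf{Step 2 (clash conditions).} Fix the constant values from $\mathcal{M}$ and consider the clash conjuncts of $\Psi^{s}(B_n,\mathcal{S})$. Each $f(t)\in\mathcal{S}$ has $t = kx + g$, where $g$ is ground and $\operatorname{sign}(k)=\coeffsign{f}$. Each $a\in\args(F,f)$ has a fixed integer value in $\mathcal{M}$.

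For $s=+$ we have $\ivbound{B_n}{+}=n$. The required relation $a \cmpsignstrict{\coeffsign{f}} kn+g$ means $a<kn+g$ when $k>0$ and $a>kn+g$ when $k<0$. Both hold for all sufficiently large $n$, since $|k|\ge 1$.

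For $s=-$ we have $\ivbound{B_n}{-}=-n$. The required relation becomes $a>-kn+g$ when $k>0$ and $a<-kn+g$ when $k<0$, which again hold for large $n$.

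There are finitely many pairs $(f(t),a)$, so some $N$ makes all clash conjuncts for $\mathcal{S}$ and $\mathcal{S}'$ true in $\mathcal{M}$ for every $n\ge N$.

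\textbf{Step 3 (conclusion).} At iteration $N$, the model $\mathcal{M}$ satisfies $F\land\bigwedge_{x\in B_N}\subst{Q}{x}{\repr{x}}$. It also satisfies the disjunct $\Psi^{+}(B_N,\mathcal{S})$ of ``up'' and the disjunct $\Psi^{-}(B_N,\mathcal{S}')$ of ``down'', by Steps 1 and 2. So the \textbf{if}-test succeeds and the algorithm returns \textsc{Sat}.

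Termination follows because the loop guard holds at every iteration and iteration $N$ is reached after finitely many steps.
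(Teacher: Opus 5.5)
Your proposal is correct and follows the paper's argument. A model of $\phi$ that also satisfies the \req conditions satisfies the base formula on every interval, so the loop guard never fails. The clash conjuncts become true once the symmetric interval is large enough. Your direct ``for all sufficiently large $n$'' argument on $[-n,n]$ is a cleaner substitute for the paper's explicit ceiling/floor bounds $b_{\max},b_{\min}$, which as written only give the non-strict form of the clash inequality.

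The joint reading you adopt in Step~1 is what the paper's proof uses without saying so, and it is actually needed, so your fallback would not work. For example, $c=0\land\forall x.\,\bigl((c=1\Rightarrow f(x+1)=f(x)+1)\land(c=0\Rightarrow f(x)=0)\bigr)$ is satisfiable, and all four \req conditions hold with $c=1$ (for $\mathcal{S}=\{f(x+1)\}$, $\mathcal{S}'=\{f(x)\}$). Yet for $c=0$ no upward $\mathcal{S}$ qualifies, so the algorithm loops forever.
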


While proofs of unsatisfiability are not the main goal of this paper, \Cref{alg:interval-check-param-withBigvee} can also detect unsatisfiability, in some cases.

\begin{theorem}[Termination for Finitely Unsatisfiable Formulas]
\label{thm:finiteunsatThenUNSAT}
If there exists a finite set $Z\subseteq\mathbb{Z}$ such that $F \land \bigwedge_{z\in Z} Q[x \mapsto \repr{z}]$ is unsatisfiable, then \cref{alg:interval-check-param-withBigvee} will terminate and return \textsc{Unsat}.
\end{theorem}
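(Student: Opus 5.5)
The plan is to show that the while-loop condition eventually becomes false by a compactness argument. Let $Z$ be the given finite set. Because the algorithm starts from $b_{\min}=b_{\max}=0$ and widens the interval by one in each direction per iteration, after at most $k:=\max_{z\in Z}|z|$ iterations the current interval $[b_{\min},b_{\max}]$ contains $Z$. We must also rule out the loop exiting earlier with \textsc{Sat}. The only exit other than the loop condition is the \textsc{Sat} return in the body, and by \cref{thm:interval-certificate-param} that return is taken only when $\phi$ is satisfiable. The hypothesis excludes this: any model of $\phi$ satisfies $F$ and every instance $Q[x\mapsto\repr{z}]$, so it would also be a model of the unsatisfiable formula $F\land\bigwedge_{z\in Z}Q[x\mapsto\repr{z}]$. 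Hence $\phi$ is unsatisfiable, and the body never returns \textsc{Sat}.

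Next comes the key monotonicity step. For any interval $B\supseteq Z$, the formula $F\land\bigwedge_{x\in B}Q[x]$ contains $F\land\bigwedge_{z\in Z}Q[x\mapsto\repr{z}]$ as a sub-conjunction, so it entails it. It is therefore unsatisfiable as well. Consequently, at the latest once $[b_{\min},b_{\max}]\supseteq Z$, the loop condition is checked on an unsatisfiable formula and the loop exits. This happens after finitely many iterations, each of which consists of finitely many satisfiability checks of quantifier-free UFLIA formulas and Presburger formulas. These checks are decidable, so every iteration terminates. The algorithm then returns \textsc{Unsat}.

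The only point needing care is the claim that each iteration terminates. That requires the satisfiability checks in the loop condition and in the \textsc{Sat} test to be decidable. The first is a ground UFLIA formula. In the second, $\Phi_{\text{prop}}$ contains quantifiers, but they range over fresh integer variables only, so the disjunctions over the finitely many subsets $\mathcal{S}\subseteq\mathcal{T}$ are conjoined with a ground UF part. I expect this step to be the main, though mild, obstacle. The approach is to argue, as the paper does after \cref{thm:interval-certificate-param}, that the \textsc{Sat} test is a decidable combination of LIA with uninterpreted symbols appearing only in ground form, for example after Ackermannization. Everything else is direct.
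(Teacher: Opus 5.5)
Your argument is the natural one, and it matches what the paper implicitly relies on. The paper gives no separate proof of this theorem, only the remarks that a \textsc{Sat} answer is justified by \cref{thm:interval-certificate-param} and that the loop condition is a consequence of $\phi$. Your iteration bound, the monotonicity step and your decidability remarks are fine.

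The trouble is the load-bearing step ``the body returns \textsc{Sat} only if $\phi$ is satisfiable'', which you import from \cref{thm:interval-certificate-param}. With $\Phi_{\text{prop}}$ as literally defined, that theorem fails, and so does the statement you are proving. Take $F:=(c=1)$ and $Q:=\bigl(f(x+1)=f(x+c)+x\bigr)$, which lies in the fragment. Here $Z=\{1\}$ witnesses finite unsatisfiability, since under $c=1$ the instance $\subst{Q}{x}{\repr{1}}$ reads $f(2)=f(2)+1$. Yet in the first iteration ($B=[0,0]$) the loop condition $c=1\wedge f(1)=f(c)$ is satisfiable. For $\mathcal{S}=\mathcal{S}'=\mathcal{T}$ one gets $\Phi^{+}_{\text{ext}}(\mathcal{T})\equiv\Phi^{-}_{\text{ext}}(\mathcal{T})\equiv(c=1)$, and the clash condition is vacuous because $\args(F,f)=\emptyset$. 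Moreover,
\[
\Phi_{\text{prop}}(\mathcal{T})=\forall x\,\exists v_{f(x+1)}\,\exists v_{f(x+c)}.\; v_{f(x+1)}=v_{f(x+c)}+x,
\]
which is valid. So line~5 succeeds, and the algorithm returns \textsc{Sat} on an unsatisfiable formula.

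The defect sits in \cref{def:reqpivot-param}, not in your reasoning. $\Phi_{\text{prop}}$ introduces one existential variable per \emph{term} of $\mathcal{S}$, although $\Phi^{s}_{\text{ext}}$ forces all terms of the same symbol in $\mathcal{S}$ to denote the same cell. The proof of \cref{thm:interval-certificate-param} even notes that $X_z$ contains at most one cell per symbol, and then assigns to that cell the possibly different values $v_{f(t)}$. The informal description before \cref{def:reqpivot-param} (``assigning $v_i$ to all occurrences in $\mathcal{S}$'') points to the intended repair: replace every $f(t)\in\mathcal{S}$ by a single variable $v_f$, and likewise in $\Phi^{\text{e}}_{\text{prop}}$.

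In the example, this repair turns the propagability condition into $\forall x\,\exists v.\,v=v+x$, which is false. Every other choice of $\mathcal{S}$ fails too: the singletons contradict $c=1$ via the extremal condition, and $\Phi_{\text{prop}}(\emptyset)$ is false. The algorithm then correctly returns \textsc{Unsat} at $[-1,1]$.

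With that repair, \cref{thm:interval-certificate-param} goes through and your proof is complete. As submitted, you should either make the repair explicit or state that your argument is conditional on it. This cannot be avoided, because excluding a premature \textsc{Sat} is indispensable for the claimed termination with \textsc{Unsat}.
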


However, there are formulas that are not satisfiable in a standard model, for which the algorithm does not terminate.

\begin{example}[Formula Requiring Induction]
Consider the formula:
\[
f(0) = 0 \land f(c) \neq 0 \land \forall x.\, f(x) = f(x+1)
\]
This formula is unsatisfiable in standard integer arithmetic (applying induction in both directions yields $\forall n.\, f(n) = 0$, contradicting $f(c) \neq 0$). However, any finite set of instantiations is satisfiable by choosing $c$ to be a sufficiently large integer. Thus, \cref{alg:interval-check-param-withBigvee} will expand the interval indefinitely without detecting unsatisfiability.
\end{example}

\section{Propositional Encoding}

The algorithm presented so far relies on large disjunctions over all possible subset choices $\mathcal{S}\subseteq\mathcal{T}$ and $\mathcal{S}'\subseteq\mathcal{T}$. While theoretically elegant, these disjunctions lead to formulas that grow exponentially with the size of the set $\mathcal{T}$, making them computationally impractical. We now present a compact CNF encoding that avoids this issue.

Instead of enumerating all possible subset choices, we introduce auxiliary Boolean variables that represent membership in the subsets $\mathcal{S}$ and $\mathcal{S}'$. For each term $f(t)\in\mathcal{T}$, we create a propositional variable 
$p_{f(t)}^+$ that is true iff $f(t) \in \mathcal{S}$, and another variable~$p_{f(t)}^-$ that is true iff $f(t) \in \mathcal{S}'$.

We then encode the extremal condition $\Phi^s_{\text{ext}}$, the propagability condition $\Phi_{\text{prop}}$, and the clash conditions using these variables, yielding a compact representation of $\Psi^{s,\text{e}}(B)$ for each sign $s$.

\begin{definition}[CNF Encoding of Interval Extension Formulas]
\label{def:interval-extension-formulas-CNF}
For an interval $B = [b_{\min}, b_{\max}]$ and a sign~$s\in\{+,-\}$, define:
\[\Psi^{s,\text{e}}(B) \;\equiv\; \Phi^{s,\text{e}}_{\text{ext}} \;\land\; \Phi_{\text{prop}}^{\text{e}} \;\land\; \text{ClashEnc}^s\]
where:
\begin{itemize}
    \item Extremal condition encoding $\Phi^{s,\text{e}}_{\text{ext}}$:
    \begin{align*}
        \subst{
        \bigwedge_{f \in \mathcal{F}}\ \bigwedge_{\substack{f(t),\, f(t') \in \mathcal{T}}}
        \Bigl( (p^s_{f(t)} \land p^s_{f(t')}) \Rightarrow t = t' \;\land\; 
        (p^s_{f(t)} \land \neg p^s_{f(t')}) \Rightarrow t' \;\cmpsignstrict{\coeffsign{f}s}\; t \Bigr)}{x}{\repr{0}}
    \end{align*}
   
    \item Propagation encoding $\Phi_{\text{prop}}^{\text{e}}$:%
    \[
    \bigwedge_{\mathcal{T}_{\!-}\subseteq\mathcal{T}}
    \left(\bigwedge_{f(t)\in\mathcal{T}_{\!-}} \neg p^s_{f(t)} \;\wedge\; \bigwedge_{f(t)\in\mathcal{T}\setminus\mathcal{T}_{\!-}} p^s_{f(t)}\;\Rightarrow\; \forall x\, \forall \bar{u}\, \exists \bar{v}\; Q^{\mathcal{T}_{\!-}}\right),
    \]
    where $\bar{u}=(u_{f(t)})_{f(t)\in\mathcal{T}_{\!-}}$ and $\bar{v}=(v_{f(t)})_{f(t)\in\mathcal{T}\setminus\mathcal{T}_{\!-}}$, and $Q^{\mathcal{T}_{\!-}}$ is obtained from $Q$ by simultaneously replacing each $f(t)\in\mathcal{T}_{\!-}$ by $u_{f(t)}$ and each $f(t)\in\mathcal{T}\setminus\mathcal{T}_{\!-}$ by $v_{f(t)}$.

    \item Clash condition encoding $\text{ClashEnc}^s$:
\[  \bigwedge_{f(t)\in\mathcal {T}}\bigwedge_{a \in \args(F,f)} \left(p^s_{f(t)} \Rightarrow a\cmpsignstrict{\coeffsign{f}s}  \subst{t}{x}{\ivbound{B}{s}}\right)    \]
\end{itemize}
\end{definition}

Note that if for some $f\in\mathcal{{F}}$ the ground parts of its argument terms (obtained by substituting $x=0$) are pairwise different, then at most one $p^s_{f(t)}$ can be true in any satisfying assignment of $\Psi^{s,\text{e}}(B)$. This follows from the extremal condition, which would otherwise require two distinct ground terms to be equal.

\begin{property}[Encoding Correctness]
\label{lem:encoding-correctness}
For any interval $B = [b_{\min}, b_{\max}]$,  and sign~$s\in\{+, -\}$, 
 $\Psi^{s,\text{e}}(B)$ is satisfiable iff there exists $\mathcal{S}\subseteq\mathcal{T}$ such that $\Psi^{s}(B,\mathcal{S})$ holds.
\end{property}

\section{Experiments}

Our experiments test the conjecture that the algorithm introduced in the previous two sections is able to improve
upon state-of-the-art $\smt$ solvers concerning the capability of showing formulas in the considered class to be satisfiable.
We prepared a test suite consisting of 28 handcrafted satisfiable
problems\footnote{All the data for our experiments are available at our public repository \\\url{https://github.com/MarekDanco/f-definitions}}
that, in addition, satisfy the \req condition. All problems in our set contain up to two constants, one or two function
symbols of arity 1, simple ground constraints and one universally quantified assertion.
Two examples of the problems in our set are
\begin{equation}\label{eq:exp1}
	f(4)=7 \:\land\: \forall x.\: f(x+1)=f(x)+1
\end{equation}  %
and
\begin{equation}\label{eq:exp2}
	d > 1\:\land\: f(0)=0 \:\land\: g(2) > f(d+1) \land\forall x \:.\: f(x-1)=f(x)+g(x)+g(x+d).
\end{equation}
To simplify implementation, we used versions that have an additional guard to restrict the range of the universally quantified variable to the naturals instead of the integers. The resulting formulas are of the form $F\land (\forall x \:.\: 0\leq x \Rightarrow Q)$. To study solver behavior over increasing model size, we also used versions where the guard restricts the range to a finite interval with upper bound $10^c$, with  $c\in \{1,\dots,5\}$. The resulting formulas are of the form  $F\land \forall x \:.\: (0\leq x \leq 10^c\Rightarrow Q)$.

Apart from the upper bound $10^c$, our problems do not contain big integers in function arguments, which would necessitate smarter initialization of the interval~$[b_{\min},b_{\max}]$ to avoid numerous loop iterations of the while loop in \cref{alg:interval-check-param-withBigvee}.

We implemented our algorithm using the \textsc{z3} Python API. Our implementation checks only upward induction using~$\Psi^+$, omitting downward induction, which is sound due to the guard described in the previous paragraph. It turned out that our algorithm terminates on all problems in negligible time.
Moreover, the value of the parameter~$c$ does not noticeably influence runtime
for the bounded versions of the test problems. We found the former observation surprising,
while the latter was to be expected, due to the use of induction.

We also applied the \textsc{smt} solvers \textsc{z3} and \textsc{cvc5}
directly to our problems, explicitly enabling model-based quantifier instantiation~\cite{ge-cav09}.
Additionally, we ran \textsc{cvc5} in \textsc{sygus} inference mode.
To study the behavior of direct application of the three $\smt$ solvers to our problems, and the influence
of the parameter~$c$ on their runtime, we ran our experiments with a 10-minute time limit.
When setting $c = 1$, the \textsc{smt} solvers terminate on our
problems immediately. On problem~\eqref{eq:exp1} with $c = 3$, \textsc{cvc5} takes
54 seconds to return \textsc{sat}, and for $c \geq 4$, the solver times out.
With $c \geq 2$, \textsc{z3} returns \textsc{unknown} after 12 seconds. For problem~\eqref{eq:exp2},
both solvers fail as $c$ increases: \textsc{cvc5} times out when $c \geq 3$,
and \textsc{z3} returns \textsc{unknown} for $c \geq 2$ (typically within 44–50 seconds).
Figure~\ref{fig:scaling} depicts how the two solvers perform on our problems with increasing $c$.
We also tested the performance of the three solvers on unbounded problems, that is with the
bound $0 \le x \le 10^c$ removed. \textsc{z3} solved 4 of those problems while \textsc{cvc5}
solved none.

\textsc{cvc5} in \textsc{sygus} mode solved 15 of our problems regardless
of the asserted bounds.
One problem that \textsc{cvc5} in \textsc{sygus} mode did not solve is
\begin{align*}
	f(0) = 1 \;\land \:\forall x \:. \:f(x+1) = 2*f(x).
\end{align*}
It is reasonable to assume that the function $\lambda x. \:2^{x}$ is not in \textsc{cvc5}
default \textsc{sygus} grammar. Further, Figure~\ref{fig:scaling} clearly shows
that the \textsc{sygus} approach is fast and consistent, but only on problems where
the functions conform to the pre-defined grammar.

We further tested the performance of the three solvers within a 1-minute time limit
(denoted by the -min suffix in Figure~\ref{fig:scaling}).
The performance of \textsc{z3} was the same. \textsc{cvc5} lost one solution for
$c = 2$ and $c = 3$. Additionally, \textsc{cvc5} in \textsc{sygus} mode lost one
solution for the unbounded problems.

\begin{figure}[ht]
	\centering
	\includegraphics[width=0.6\linewidth]{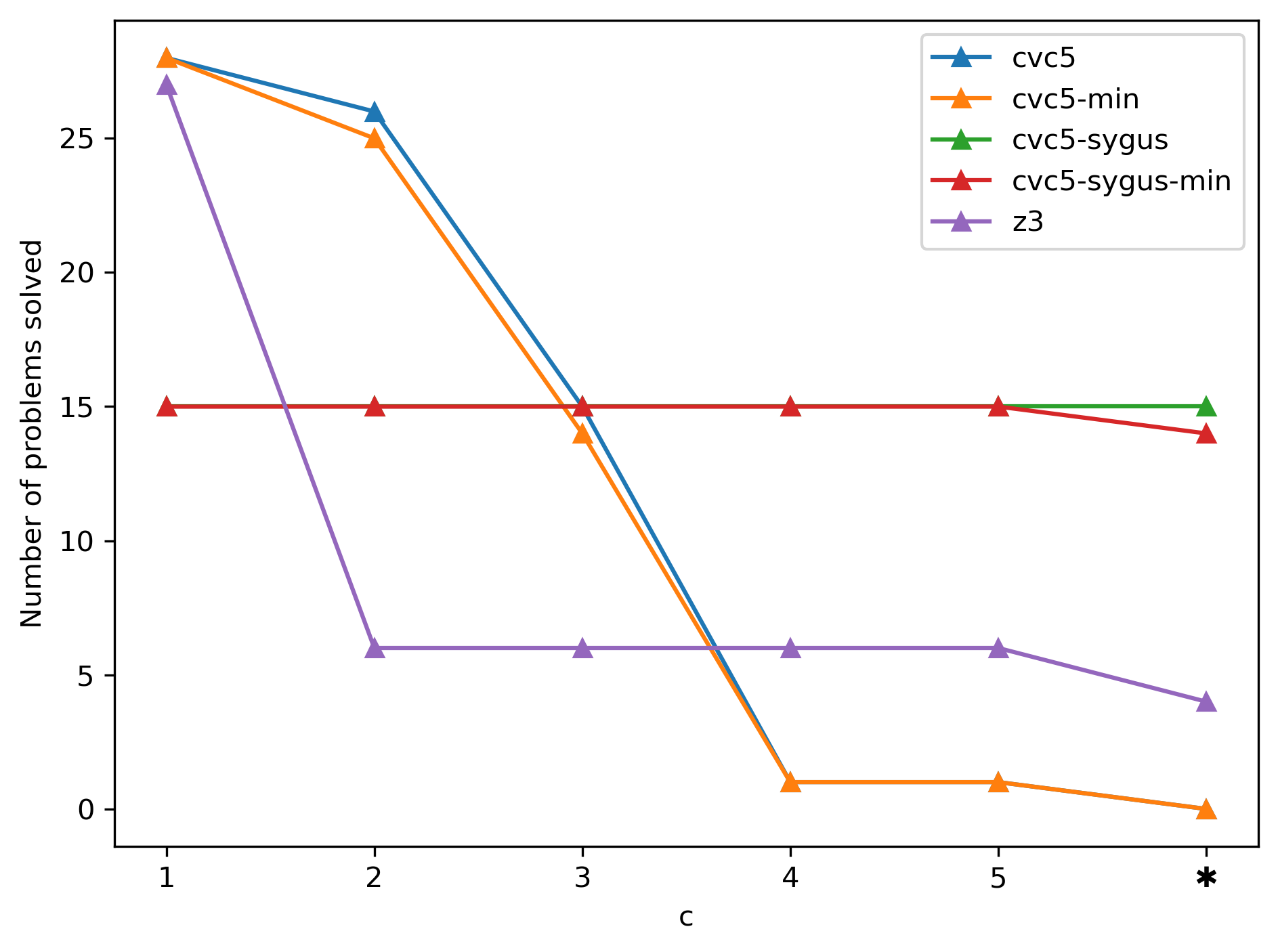}
	\caption{Problems solved with increasing $c$. $*$ refers to unbounded problems.}
	\label{fig:scaling}
\end{figure}

\section{Conclusion}
\label{sec:conclusion}

We have presented a method for certifying satisfiability of formulas with universal quantifiers and uninterpreted function symbols, and a corresponding algorithm for a specific class of formulas in the SMTLIB theory UFLIA. By relying on induction, the method succeeds for formulas without finite models and for formulas satisfiable only by functions lacking closed-form representations.
Beyond the syntactic restrictions of the input fragment, the algorithm also requires the semantic $\req$ condition to enable inductive extension. In future work, we aim to relax this restriction, for instance by employing
stronger forms of induction. Our long term vision is a general algorithm that is parametric in the underlying theories associated with the domain and codomain of the uninterpreted function symbols, and that depends only on a predicate defining the order in which to apply induction.

\bibliographystyle{splncs04}

\appendix

\section{Appendix}

\subsection*{Proof of Property~\ref{prop:recmonot}:}

  Assume a cell $u$ with $\mathit{inst}(u)\neq\bot$, a  cell~$v\in\Gamma_{I,\mathit{inst}(u)}(\phi)\setminus \mathit{def}(I)\setminus X_z$ with $\mathit{inst}(v)\neq\bot$, and assume that $\mathit{inst}(v)\succeq \mathit{inst}(u)$. Then both $v\in X_{\mathit{inst}(v)}$ and $v \in \Gamma_{I,\mathit{inst}(u)}(\phi)$, and hence $X_{\mathit{inst}(v)}\cap \Gamma_{I,\mathit{inst}(u)}(\phi)\neq\emptyset$. 
  This contradicts the requirement that $X_{\mathit{inst}(v)}\cap \mathit{def}(I)=\emptyset$ and $X_{\mathit{inst}(v)}\cap\bigcup_{z'\prec \mathit{inst}(v)} \Gamma_{I,z'}(\phi)=\emptyset$.
  \qedhere

\subsection*{Proof of \cref{thm:interval-certificate-param}:}

  Suppose the conjunction is satisfiable, let $I^*$ be an interpretation satisfying it, and let $\mathcal{S}$ be such that $I^*\models \Psi^+(B,\mathcal{S})$, and $\mathcal{S}'$ such that $I^*\models \Psi^-(B,\mathcal{S}')$. Assume that  $B = [b_{\min}, b_{\max}]$. We construct a certificate as in \cref{def:certificate}:

\begin{enumerate}
    \item \textbf{Pre‑satisfiability certificate:} Define the cell interpretation~$I$ as follows:
      \begin{itemize}
        \item For every uninterpreted constant $c$ in $\phi$, set $I(c)=I^*(c)$.
        \item For every $a\in args(F \land \bigwedge_{x \in B} \subst{Q}{x}{\repr{x}}, f)$,  where $f$ is an uninterpreted function symbol, $I(f(I^*(a)))=I^*(f)(I^{*}(a))$
    \end{itemize}
    By construction, $I$ is finite and satisfies $F \land \bigwedge_{x \in B} \subst{Q}{x}{\repr{x}}$ because $I^*$ satisfies that formula and $I$ agrees with $I^*$ on all relevant cells.
       
    \item \textbf{Well‑order $\preceq$ on $\mathbb{Z}$:}
    \[0, 1, -1, 2, -2, 3, -3, \dots, b_{\max}, b_{\min},b_{\max}+1, b_{\max}+2, \dots,b_{\min}-1, b_{\min}-2, \dots\]		
    
  \item \textbf{Propagated cells $X_z$ for $z \in \mathbb{Z}$:}
    \[
    X_z = 
    \begin{cases}
    \emptyset & \text{if } z \in B, \\[4pt]
    \bigl\{ f(I(\subst{t}{x}{\hat{z}})) \mid f(t)\in\mathcal{S} \bigr\} & \text{if } z > b_{\max}, \\[4pt]
    \bigl\{ f(I(\subst{t}{x}{\hat{z}})) \mid f(t)\in\mathcal{S}' \bigr\} & \text{if } z < b_{\min}.
    \end{cases}
    \]
    Note that the extremal conditions $\Phi^+_{\text{ext}}(\mathcal{S})$ and $\Phi^-_{\text{ext}}(\mathcal{S}')$ ensure that for $z\not\in B$, $f\in\mathcal{F}$, $|\{ n \mid f(n)\in X_z\}|\leq 1$.

    \item \textbf{Satisfiability propagators:}
    \begin{itemize}
        \item For $z \in B$: In this case, $\subst{(x\not\in B \Rightarrow Q)}{x}{\hat{z}}$ is a tautology and the propagator is trivial.
        \item For $z\not\in B$: the condition $\Phi_{\text{prop}}(\mathcal{S})$ (if $z > b_{\max}$) or $\Phi_{\text{prop}}(\mathcal{S}')$ (if $z < b_{\min}$) provides, for any given values for all non‑propagated cells in $\Gamma_{I,z}(\phi)$ and each $f(t)\in\mathcal{S}$, values $v_{f(t)}$  such that assigning these values to each corresponding cells in $X_z$ satisfies~$\subst{Q}{x}{\repr{z}}$.
    \end{itemize}
\end{enumerate}
It remains to verify the disjointness conditions of \cref{def:certificate}:
\[X_z\cap \mathit{def}(I)=\emptyset \text{ and } X_z\cap\bigcup_{z'\prec z} \Gamma_{I,z'}(\phi)=\emptyset.\]
For $z \in B$, $X_z = \emptyset$, and hence both conditions hold trivially.  For $z > b_{\max}$, any element of $X_z$ has the form $f(I(\subst{t}{x}{\repr{z}}))$ with $f(t) \in \mathcal{S}$.
    \begin{itemize}
    \item To prove that this element of $X_z$ cannot be in $\mathit{def}(I)$, let $f'(n)$ be an arbitrary, but fixed element of $\mathit{def}(I)$ with non-zero arity. We prove that $f'(n)\neq f(I(\subst{t}{x}{\repr{z}}))$. If $f$ is a different function symbol than $f'$ this is certainly the case, and hence it suffices to prove $f(n)\neq f(I(\subst{t}{x}{\repr{z}}))$, which makes it necessary to prove $I(n)\neq I(\subst{t}{x}{\repr{z}})$. Due to the construction of the cell interpretation~$I$, $n=I(a)$, with $a\in args(F \land \bigwedge_{x \in B} \subst{Q}{x}{\repr{x}}, f)$.
      \begin{itemize}
      \item       If $a\in args(F, f)$, the clash condition $a\cmpsignstrict{\coeffsign{f}s}\subst{t}{x}{\ivbound{B}{s}}$ in $\Psi^+(B,\mathcal{S})$ guarantees that for every ground argument $a \in \args(F,f)$:
        \[
        I\models a\cmpsign{\coeffsign{f}+}  \subst{t}{x}{\repr{b}_{\max}} \cmpsignstrict{\coeffsign{f}+} \subst{t}{x}{\repr{z}},
        \]
        where the strict inequality follows because $t$ is linear, and $z > b_{\max}$. Hence $I(a)\neq I(\subst{t}{x}{\repr{z}})$.
      \item If  $a\not\in args(F, f)$, then there is a $z_B\in B$ such that $a\in args(Q[x\leftarrow \hat{z}_B],f)$.   Now, since $z_B\leq b_{\max}<z$,  again  $I\models a\cmpsign{\coeffsign{f}+}  \subst{t}{x}{\repr{b}_{\max}}$, and  $I(a)\neq I(\subst{t}{x}{\repr{z}})$.
      \end{itemize}

      \item To prove that this element of $X_z$ cannot appear in any $\Gamma_{I,z'}(\phi)$ with $z' \prec z$, let $f'$ be an uninterpreted function symbol, $t'\in args(Q,f')$,  and prove that $f'(I(\subst{t'}{x}{\repr{z}'}))\neq f(I(\subst{t}{x}{\repr{z}}))$. Again it suffices to analyze the case where $f$ and $f'$ are the same function symbols, and to prove that
        $I(\subst{t'}{x}{\repr{z}'})\neq I(\subst{t}{x}{\repr{z}})$. Observe that since $t'$ and $t$ are arguments of the same function symbol, the respective coefficients of $x$ in $t$ and $t'$ are the same. Also observe that $z'\prec z$ and $b_{\max}<z$ implies that $z'<z$. Due to this,
        \[
          I\models \subst{t'}{x}{\hat{z}'}\cmpsignstrict{\coeffsign{f}+} \subst{t'}{x}{\hat{z}}.
        \]
        Moreover, $\Phi^{+}_{\text{ext}}(\mathcal{S})$ ensures 
                \[
          I\models \subst{t'}{x}{\hat{z}}\cmpsignstrict{\coeffsign{f}+}\subst{t}{x}{\hat{z}}.
        \]
Hence $I(\subst{t'}{x}{\repr{z}'})\cmpsignstrict{\coeffsign{f}+} I(\subst{t}{x}{\repr{z}})$, which implies that the two sides of the inequality cannot be the same.
    \end{itemize}

  The case $z < b_{\min}$ is symmetric, using $\Psi^-(B,\mathcal{S}')$. Note that the fact that the upward propagation block of elements greater than $b_{\max}$ occurs before the downward propagation block of elements less than $b_{\min}$ in the order $\preceq$ does not create a problem, since  propagation is still restricted to cells not occurring before.

Thus all certificate requirements are satisfied. By \cref{thm:certThenSat}, $\phi$ is satisfiable.
\qedhere

\subsection*{Proof of \cref{thm:completeness-bigvee}:}

  Assume $\phi$ is satisfiable and satisfies \req condition for $(\mathcal{S}^*,\mathcal{S}'^*)$. Let $J$ be an interpretation with $J\models\phi$. For each function $f \in \mathcal{F}$, let $c_f$ be the common coefficient of $x$ in all its arguments, and define:
\[
v_{\max}^f = \max\{\, J(\subst{t}{x}{\repr{0}}) \mid f(t) \in \mathcal{T}\,\}, \qquad
v_{\min}^f = \min\{\, J(\subst{t}{x}{\repr{0}}) \mid f(t) \in \mathcal{T} \,\}.
\]
The extremal conditions $\Phi^+_{\text{ext}}(\mathcal{S}^*)$ and $\Phi^-_{\text{ext}}(\mathcal{S}'^*)$ ensure that
\begin{itemize}
    \item For all $f(t) \in \mathcal{S}^*$: $J(\subst{t}{x}{\repr{0}}) = v_{\max}^f$, and
    \item For all $f(t) \in \mathcal{S}'^*$: $J(\subst{t}{x}{\repr{0}}) = v_{\min}^f$.
\end{itemize}
Now choose
\begin{minipage}{\linewidth-2cm}
\vspace*{-0.5cm}  
\begin{align*}
  b_{\max} &= \max\left\{\, \left\lceil \frac{J(a) - v_{\max}^f}{c_f} \right\rceil \;\middle|\; a \in \args(F,f), f \in \mathcal{F} \,\right\}\\
b_{\min} &= \min\left\{\, \left\lfloor \frac{J(a) - v_{\min}^f}{c_f} \right\rfloor \;\middle|\; a \in \args(F,f), f \in \mathcal{F} \,\right\},
\end{align*}
\end{minipage}
and set $B = [b_{\min}, b_{\max}]$. We verify that $J$ satisfies:
\begin{enumerate}
    \item \textbf{Base formula}: $J \models F \land \bigwedge_{x \in B} Q[x]$, since $J \models \phi$.
    
    \item \textbf{Upward propagation condition}: 
    For the specific $\mathcal{S}^*$, we have:
    \begin{itemize}
        \item $\Phi_{\text{ext}}^+(\mathcal{S}^*)$ holds by definition,
        \item $\Phi_{\text{prop}}(\mathcal{S}^*)$ holds by the \req condition,
        \item For each $f(t)\in\mathcal{S}^*$ and $a \in \args(F,f)$:\\ By construction of $b_{\max}$ and since $J(\subst{t}{x}{\repr{0}})=v_{\max}^f$, we have:   %
        \[
        J(a) \cmpsign{\coeffsign{f}+} c_f b_{\max} + v_{\max}^f = c_f  b_{\max} + J(\subst{t}{x}{\repr{0}}) = J(\subst{t}{x}{\hat{b}_{\max}}),
        \]
        which is exactly the clash condition in $\Psi^+(B,\mathcal{S}^*)$.
    \end{itemize}
    Hence $J \models \Psi^+(B,\mathcal{S}^*)$.
    
    \item \textbf{Downward propagation condition}: 
    Similarly, for $\mathcal{S}'^*$ we have:
    \begin{itemize}
        \item $\Phi_{\text{ext}}^-(\mathcal{S}'^*)$ holds,
        \item $\Phi_{\text{prop}}(\mathcal{S}'^*)$ holds by the \req condition,
        \item For each $f(t)\in\mathcal{S}'^{*}$ and $a \in \args(F,f)$:
        by construction of $b_{\min}$,
        \[
        J(a)\cmpsign{\coeffsign{f}-}  c_f b_{\min} + v_{\min}^f = c_f b_{\min} + J(\subst{t}{x}{\repr{0}}) = J(\subst{t}{x}{\repr{b}_{\min}}),
        \]
        which is the clash condition in $\Psi^-(B,\mathcal{S}'^*)$.       
    \end{itemize}
    Thus $J \models \Psi^-(B,\mathcal{S}'^*)$.
\end{enumerate}

Therefore, $J$ satisfies
\[
F \land \bigwedge_{x \in B} Q[x] \;\land\; \Psi^+(B,\mathcal{S}^*) \;\land\; \Psi^-(B,\mathcal{S}'^*),
\]
and consequently also satisfies the disjunctions $\bigvee_{\mathcal{S}}\Psi^+(B,\mathcal{S})$ and $\bigvee_{\mathcal{S}'}\Psi^-(B,\mathcal{S}')$.

Let $N = \max(|b_{\min}|, |b_{\max}|)$. \cref{alg:interval-check-param-withBigvee} expands the interval symmetrically from $[0,0]$. When it reaches iteration $n = N$, the interval becomes $[-N, N]$, which contains $B$. At this iteration, line 5 will check the satisfiability of:
\[
F \land \bigwedge_{x=-N}^{N} Q[x] \;\land\; \bigvee_{\mathcal{S}}\Psi^+([-N,N],\mathcal{S}) \;\land\; \bigvee_{\mathcal{S}'}\Psi^-([-N,N],\mathcal{S}').
\]
Since $[-N, N]\supseteq B$, $J$ also satisfies this formula. Hence, the algorithm terminates with \textsc{Sat}.
\qedhere

\end{document}